\documentclass[sigconf]{acmart}
\usepackage{bm}
\usepackage{amsmath}
\usepackage{amsthm}
\usepackage{multirow}
\usepackage{xcolor}
\AtBeginDocument{%
 }

\copyrightyear{2026}
\acmYear{2026}
\setcopyright{cc}
\setcctype{by}
\acmConference[CIKM '26]{Proceedings of the 35th ACM International Conference on Information and Knowledge Management}{November 07--11, 2026}{Rome, Italy}
\acmBooktitle{Proceedings of the 35th ACM International Conference on Information and Knowledge Management (CIKM '26), November 07--11, 2026, Rome, Italy}
\acmDOI{10.1145/3799682.3840732}
\acmISBN{979-8-4007-2539-5/2026/11}

\ccsdesc[500]{Information systems~Recommender systems}
\ccsdesc[500]{Information systems~Retrieval models and ranking}
\begin{document}

%%
%% The "title" command has an optional parameter,
%% allowing the author to define a "short title" to be used in page headers.
\title{SWIM: Step-Wise Integrated Measure for Session-supervised List Evaluation in Generative Re-ranking}
%% The "author" command and its associated commands are used to define
%% the authors and their affiliations.
%% Of note is the shared affiliation of the first two authors, and the
%% "authornote" and "authornotemark" commands
%% used to denote shared contribution to the research.
\author{Yuanhao Pu}
\authornote{These authors contributed equally to this research.}
\authornote{Also affiliated with State Key Laboratory of Cognitive Intelligence}
\email{puyuanhao@mail.ustc.edu.cn}
\affiliation{
    \institution{University of Science and Technology of China}
    \city{Hefei}
    \state{Anhui}
    \country{China}
}

\author{Chenghao Zhang}
\authornotemark[1]
\email{zhangchenghao03@kuaishou.com}
\affiliation{%
  \institution{Kuaishou Technology}
  \city{Beijing}
  \country{China}
}

\author{Chao Feng}
\authornotemark[1]
\email{fengchao08@kuaishou.com}
\affiliation{%
  \institution{Kuaishou Technology}
  \city{Beijing}
  \country{China}}

\author{Xunyong Yang}
\email{yangxunyong@kuaishou.com}
\affiliation{%
  \institution{Kuaishou Technology}
  \city{Beijing}
  \country{China}
}

\author{Xiang Li}
\email{lixiang44@kuaishou.com}
\affiliation{%
 \institution{Kuaishou Technology}
 \city{Beijing}
 \country{China}
}

\author{Yongqi Liu}
\email{liuyongqi@kuaishou.com}
\affiliation{%
 \institution{Kuaishou Technology}
 \city{Beijing}
 \country{China}
}

\author{Defu Lian}
\authornotemark[2]
\email{liandefu@ustc.edu.cn}
\affiliation{
    \institution{University of Science and Technology of China}
    \city{Hefei}
    \state{Anhui}
    \country{China}
}

\author{Kaiqiao Zhan}
\email{zhankaiqiao@kuaishou.com}
\affiliation{%
 \institution{Kuaishou Technology}
 \city{Beijing}
 \country{China}
}

\author{Kun Gai}
\email{gai.kun@qq.com}
\affiliation{%
  \institution{Unaffiliated}
  \city{Beijing}
  \country{China}
}

%%
%% By default, the full list of authors will be used in the page
%% headers. Often, this list is too long, and will overlap
%% other information printed in the page headers. This command allows
%% the author to define a more concise list
%% of authors' names for this purpose.
\renewcommand{\shortauthors}{Pu et al.}

%%
%% The abstract is a short summary of the work to be presented in the
%% article.
\begin{abstract}

Modern industrial recommender systems have increasingly adopted the Generator-Evaluator (G-E) framework for the re-ranking stage. Within this paradigm, the generator produces candidate item lists from a pool filtered by upstream retrieval and ranking modules, while the evaluator scores these lists and selects the highest-scoring one for final exposure per request. However, on sequential platforms (e.g., short-video apps), users consume items continuously, ignoring artificial list boundaries. Conventional evaluators score lists by aggregating point-wise values, implicitly assuming exposure independence. This fails to capture critical session-level dynamics, such as contextual dependencies, user continuation, and diminishing marginal utility from repetitive content.

To bridge this gap, we propose \textbf{SWIM} (\textbf{S}tep-\textbf{W}ise \textbf{I}ntegrated \textbf{M}easure), a list-level evaluator that models user behaviors as a finite-horizon prefix session-level survival process. SWIM estimates the prefix-conditioned contribution of the current list to the session-level objective by factorizing it into a recursive survival distribution and reached-position conditional rewards. Leveraging a causally-masked Transformer, SWIM efficiently estimates continuation probabilities and utilities in parallel, satisfying strict industrial latency constraints. Extensive experiments demonstrate that SWIM significantly outperforms baselines in listwise reranking tasks, yielding substantial improvements in overall recommendation engagement.
\end{abstract}

%%
%% The code below is generated by the tool at http://dl.acm.org/ccs.cfm.
%% Please copy and paste the code instead of the example below.
%%

%%
%% Keywords. The author(s) should pick words that accurately describe
%% the work being presented. Separate the keywords with commas.
\keywords{Recommender Systems, Listwise Evaluation, Generative Re-ranking, Survival Analysis}
%% A "teaser" image appears between the author and affiliation
%% information and the body of the document, and typically spans the
%% page.

% \received{20 February 2007}
% \received[revised]{12 March 2009}
% \received[accepted]{5 June 2009}

%%
%% This command processes the author and affiliation and title
%% information and builds the first part of the formatted document.
\maketitle

\section{Introduction}
\label{sec:intro}
\begin{figure*}[t]
    \centering
    \includegraphics[width=\textwidth]{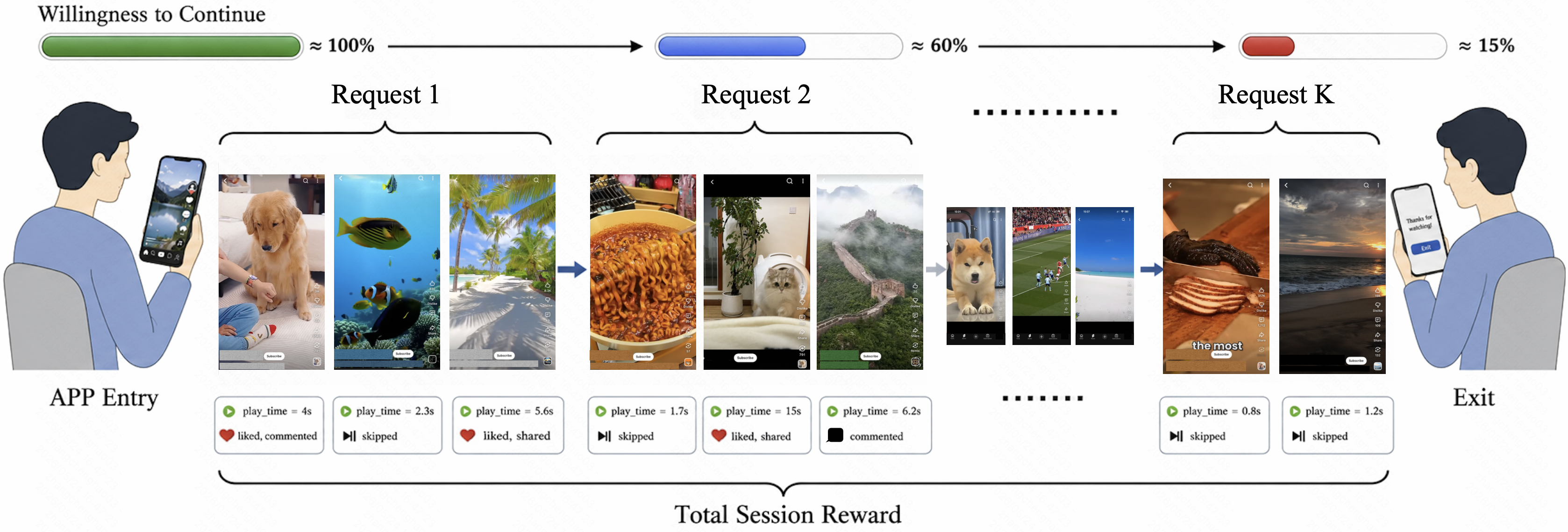}
    \caption{A recommendation session consists of multiple requests, the user's willingness to continue evolves with the preceding browsing history and accumulated feedback.}
    \label{fig:userbehav}
\end{figure*}

Recommender systems are widely deployed to alleviate information overload in diverse scenarios, such as short-video feeds~\cite{jeunen2023probabilistic}, news applications~\cite{moreira2018news,moreira2019contextual}, and e-commerce platforms~\cite{hidasi2016session,li2017neural}. To ensure both efficiency and effectiveness, modern industrial RS typically employs a multi-stage cascade architecture consisting of retrieval, ranking, and re-ranking. In this pipeline, the retrieval stage filters a massive item pool down to thousands of candidates, which are then scored by complex ranking models to yield a refined candidate set of hundreds of items. Finally, the re-ranking stage determines the ultimate sequence of items presented to the user~\cite{liu2022neural}. Recently, to better capture list-wise interactions, re-ranking has evolved into the Generator-Evaluator (G-E)~\cite{zhang2025from,feng2026flashevaluator,shi2023pier,yang2025comprehensive} framework. Under this generative paradigm, the generator produces a diverse collection of candidate lists from the refined item set, while the evaluator assesses each list and selects the optimal one for final exposure.

This paper focuses on the re-ranking evaluator within sequential content consumption platforms (e.g., short-video applications like TikTok). In such settings, items within a list are exposed sequentially, and the list for the next request is seamlessly presented once the current one is consumed. We define a session as the complete sequence of requests spanning from the initiation to the termination of a user's browsing activity. Although the underlying system inherently chunks items into discrete lists per request, users perceive an uninterrupted, boundary-less browsing feed. However, existing literature on re-ranking evaluators largely ignores this continuity. Conventional methods~\cite{zhang2025from,feng2026flashevaluator} evaluate candidate lists independently at the request level by simply accumulating point-wise utilities. By omitting crucial session-level contextual dependencies, these approaches fail to accurately model the user's continuation probability, leading to suboptimal list evaluation. Notably, our task fundamentally differs from traditional session-based recommendation (SBR)~\cite{wang2022survey,jannach2022session,quadrana2018sequence} and sequential next-item prediction (NTP) tasks~\cite{zhou2018deep,zhou2019deep,chen2019behavior}. While SBR and NTP aim to predict the interaction probability of the next individual item, our re-ranking evaluator is designed to assess the long-term cumulative value of a given, deterministic candidate list within a continuous session.

To address these limitations, we propose the \textbf{S}tep-\textbf{W}ise \textbf{I}ntegrated \textbf{M}easure (\textbf{SWIM}), a principled evaluation framework that explicitly captures the sequential continuity of user browsing by modeling it as a finite-horizon survival process. Specifically, SWIM reformulates the total list value as the expected cumulative utility over step-wise interaction prefixes. It mathematically decomposes this expected value into two core components: a \textbf{recursive survival distribution}, which models the probability of the user reaching each subsequent interaction step, and a \textbf{conditional utility function}, which estimates the reward generated at each reached position.

To ensure serving efficiency, we instantiate SWIM using a causally masked Transformer that estimates all step-wise continuation probabilities and utilities in a single forward pass for a given candidate list. This architectural design bypasses the need for explicit autoregressive rollouts across positions, enabling highly efficient parallel estimation. By circumventing the severe inference latency typically associated with autoregressive decoding, SWIM is ideally suited for real-world industrial deployment. Furthermore, extensive experiments on large-scale industrial datasets demonstrate that SWIM consistently outperforms strong baselines, driving significant improvements in long-horizon reward metrics such as total dwell time and overall session engagement.

\begin{figure*}[t]
    \centering
    \includegraphics[width=\textwidth]{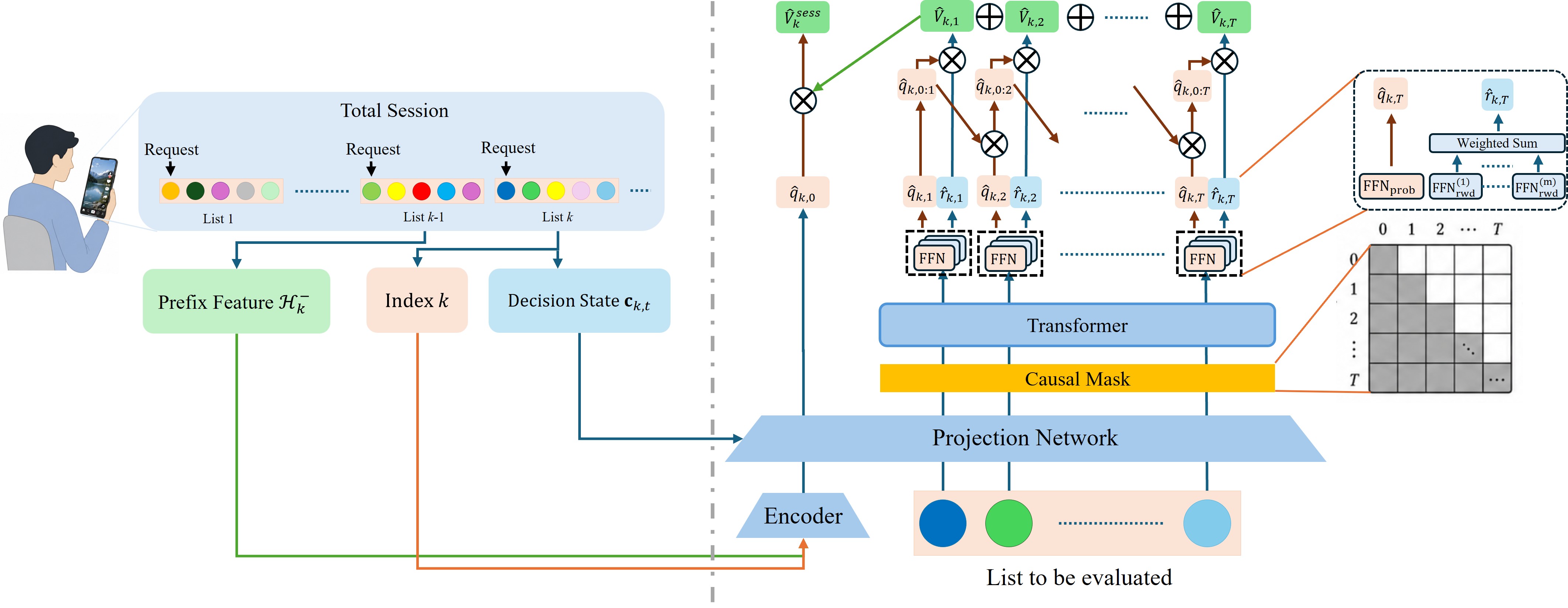}
    \caption{The model architecture of SWIM.}
    \label{fig:SWIM}
\end{figure*}

The main contributions of this work are summarized as follows:
\begin{itemize}
\item \textbf{Session-level value formulation under serving constraints.} We formalize recommendation evaluation from the perspective of prefix session-aware value and characterize the mismatch between session-level supervision in offline training and request-level inference in online serving.
\item \textbf{A survival-based structured decomposition.} We propose a finite-horizon survival-process formulation that decomposes prefix-session value into a recursive survival distribution and a cumulative reward function, enabling modeling of user continuation and long-horizon utility accumulation.
\item \textbf{A deployable non-autoregressive implementation.} We utilize a causal-masked Transformer architecture that supports parallel estimation of step-wise continuation probabilities and rewards, making the framework compatible with industrial latency requirements.
\item \textbf{Extensive industrial validation.} We validate SWIM on public datasets and a large-scale industrial platform, which shows significant improvements over strong baselines on offline metrics and online evaluation.
\end{itemize}
\section{Related Works}
In this section, we first review list-wise re-ranking approaches, with a particular focus on the Generator-Evaluator (G-E) framework in recommender systems. Next, we discuss session-based and sequential recommendation paradigms, highlighting the fundamental differences between these next-item prediction tasks and our list evaluation setting. Finally, we explore the application of survival and continuation analysis in recommender systems, which forms the theoretical foundation of our proposed method.

\subsection{Listwise Re-ranking and G-E Framework}

List-wise re-ranking~\cite{liu2022neural} aims to capture cross-item dependencies within a displayed list, overcoming the limitations of point-wise scoring. Existing methods broadly fall into generator-only (G-only) and G-E frameworks. G-only approaches~\cite{ai2018learning, bello2018seq2slate, pei2019personalized, pang2020setrank, meng2025generative, zhang2025goalrank} directly model list-wise contexts to output a single sequence but often struggle with the exploration of the vast permutation space.

To enhance exploration, modern systems adopt the G-E paradigm, where a generator proposes multiple candidate lists and an evaluator selects the optimal one. While recent works have significantly improved the generator's diversity~\cite{feng2021revisit, shi2023pier, yang2025comprehensive}, the evaluator design remains a bottleneck. Existing evaluators~\cite{zhang2025from, feng2026flashevaluator} typically score lists by aggregating request-level utilities, implicitly assuming independence across consecutive requests. Consequently, they fail to capture crucial session-level dynamics, such as cross-request dependencies and long-term user continuation.

The most relevant baseline CAVE~\cite{zhang2025from} also models user early-exit behavior. However, CAVE focuses strictly on the intra-request consumption value of the currently generated list. In contrast, SWIM targets the inter-request, session-level continuation value induced by the current list within the broader session. Technically, CAVE estimates current-list value by weighting sub-list values with user-specific exit probabilities, while SWIM rigorously parameterizes continuation via a recursive step-wise survival distribution. 

\subsection{Session-based and Sequential recommendation}
Session-based recommendation (SBR) aims to infer short-term user intent from current interaction sequences~\cite{wang2022survey,jannach2022session,quadrana2018sequence}. Early neural approaches like GRU4Rec~\cite{hidasi2016session} utilize recurrent networks, while subsequent works enhance session representations via attention mechanisms (e.g., NARM~\cite{li2017neural}, STAMP~\cite{liu2018stamp}) or graph structures (e.g., SR-GNN~\cite{wu2019session}). Beyond short sessions, sequential recommendation models incorporate long-term histories to capture evolving preferences, ranging from self-attentive architectures such as SASRec~\cite{kang2018self} and BERT4Rec~\cite{sun2019bert4rec} to scalable long-sequence models~\cite{pi2020search, chang2023twin, si2024twinv2, chai2025longer,zhang2026solar}. Target-aware sequence modeling has also been extensively validated in upstream ranking stages through models like DIN~\cite{zhou2018deep}, DIEN~\cite{zhou2019deep}, and BST~\cite{chen2019behavior}.

Despite their success in modeling sequential dependencies, they fundamentally operate from a next-item prediction (NTP) or item-level scoring perspective. They aim to predict the interaction probability of a single future item or estimate the point-wise relevance based on historical contexts. In contrast, SWIM tackles a fundamentally different list evaluation problem within a continuous session. Instead of asking "\textit{what item to recommend next}", SWIM assesses "\textit{how a deterministically given candidate list contributes to the long-term cumulative value of the entire session}". Consequently, directly applying SBR or NTP models fails to capture the complex intra-list interactions and the session-level continuation dynamics that our proposed framework explicitly models.

\subsection{Survival and Continuation Modeling in Recommendation}
Survival analysis provides a principled framework for modeling time-to-event and continuation processes, and has been increasingly explored in recommender systems. For instance, \cite{jing2017neural} apply survival modeling to predict user return times and subsequent activities. In the context of sequential feeds, C-3PO~\cite{jeunen2023probabilistic} models a user's scrolling budget via a survival function to derive personalized, position-aware exposure probabilities.
While these approaches share the core intuition of characterizing user continuation and early-exit behaviors, they primarily operate at the item or position level. SWIM advances this line from position-level estimation to list-wise session value evaluation. SWIM embeds survival-style behavioral modeling into the G-E re-ranking paradigm, achieving session-supervised optimization without violating strict request-level latency constraints.
\section{Preliminaries}
\label{sec:preliminaries}

Consider a recommendation session as a continuous user interaction process. Let \(\{v_t\}_{t\ge 1}\) denote the exposure-consumption stream within a session, where \(v_t\) is the item at the \(t\)-th position of the session trajectory. 
Let \(\tau\in\mathbb{N}_{+}\) be a discrete stopping time representing the first position that is not reached or consumed by the user. 
Equivalently, the user interacts with exactly the prefix \((v_1,\dots,v_{\tau-1})\), and the session terminates right before \(\tau\). 

For each position \(t\), let \(R_t\in\mathbb{R}_{+}\) denote the reward generated at that position, which may summarize one or multiple engagement signals such as click, like, share, follow, and play time. 
We assume that the total expected utility is finite and well-defined.

\begin{definition}[Total Session Value]
The total value of a recommendation session is defined as the cumulative reward collected before the session terminates, i.e., $G=\sum_{t\ge 1}\mathbf{1}\{\tau > t\} R_t$. Therefore, we consider that the session-level objective is to optimize the expected total session value $V^{\mathrm{sess}}=\mathbb{E}[G]$.
\end{definition}

This formulation describes the user-centric behavioral process over a whole session. 
In industrial recommender systems, however, the serving interface is \textbf{request-based}: due to limited latency constraint, the inference process cannot generate or optimize a list that is long enough to serve the entire session trajectory. Instead, the system serves the user through several request-level decision epochs. At each epoch \(k\), the system receives a candidate set from upstream retrieval or ranking stages and selects an ordered list with fixed length for the current request. 
Formally, let $\mathcal{V}_k = \{v_i\}_{i=1}^{N_k}$ denote the candidate set transmitted from upstream stage at request \(k\). 
The re-ranking model selects a length-\(T\) ordered list 
\begin{equation}
    L_k = (v_{k,1},v_{k,2},\dots,v_{k,T}) \in \Pi_T(\mathcal{V}_k),
\end{equation}
where \(\Pi_T(\mathcal{V}_k)\) denotes the set of all length-\(T\) ordered lists constructed from \(\mathcal{V}_k\). Let \(\mathcal{H}^{-}_k\) denote the observable session prefix before request \(k\), including historical interactions, previous request feedback, cumulative in-session engagement, request index, and contextual information. 
Let \(\mathbf{c}_k\in\mathcal{C}\) denote the decision state at \(k\), which summarizes the information of the current request, including candidate-side features, user features, and request-level context.

Under this formulation, \(L_k\) is a local list of the global session trajectory. The re-ranking problem is therefore not only to estimate the isolated utility of a standalone list, but also to characterize how the current list affects user survival and contributes to session-level objective through its prefix-conditioned current-list value.

\subsection{Generator-Evaluator Re-ranking}

Many practical re-ranking systems can be described through a G-E interface~\cite{feng2021revisit,shi2023pier,yang2025comprehensive}. Given the candidate set \(\mathcal{V}_k\) and decision state \(\mathbf{c}_k\), a generator proposes a set of feasible candidate lists:
\begin{equation}
\mathcal{L}_k
=
\mathcal{G}_{\theta}(\mathcal{V}_k,\mathbf{c}_k)
=
\{L_k^{(1)},L_k^{(2)},\dots,L_k^{(M)}\},
\end{equation}
where each \(L_k^{(m)}\in\Pi_T(\mathcal{V}_k)\). 
An evaluator then assigns a listwise score to each generated list $s_k^{(m)}=\mathcal{E}_{\phi}(L_k^{(m)},\mathbf{c}_k)$. The final exposed list is selected by
\begin{equation}
L_k^{*}
=
\arg\max_{L_k^{(m)}\in\mathcal{L}_k}
\mathcal{E}_{\phi}(L_k^{(m)},\mathbf{c}_k).
\end{equation}

In this work, SWIM focuses on the evaluator-side problem: estimating a session-aware cumulative value for a candidate list under request-level serving constraints.

\subsection{Survival Process and Reward}

We next introduce the notations to describe user survival process and expected reward within a request-level list. Let $A_k\in\{0,1\}$ denote whether the user reaches request \(k\). 
The request-entry probability is defined as 
\begin{equation}
    q_{k,0}=\mathbb{P}(A_k=1\mid \mathcal{H}^{-}_k).
\end{equation}
This quantity describes the probability that the user enters the current request from the preceding session prefix. 
For the first request in a session, we set \(q_{1,0}=1\) by convention.

Conditioned on \(A_k=1\), let $Y_{k,t}\in\{0,1\}, t=1,\dots,T+1$ denote whether the user reaches position \(t\) within the current list. 
Here, \(Y_{k,t}=1\) for \(t\le T\) means that the user reaches the \(t\)-th item in \(L_k\), while \(Y_{k,T+1}=1\) indicates that the user finishes the current list and proceeds beyond the current list. 
Conditioned on the current request being reached, the first position is reached by convention:
\begin{equation}
Y_{k,1}=1 \quad \text{given }\quad A_k=1.
\end{equation}
Besides, the browsing process satisfies the monotonicity constraint
\begin{equation}
Y_{k,t}=0 \Rightarrow Y_{k,t+1}=0,
\quad t=1,\dots,T.
\end{equation}

For each position \(t=1,\dots,T\), the in-list step-wise \textbf{survival probability} is defined as
\begin{equation}
q_{k,t}(L_k\mid \mathbf{c}_k)
=
\mathbb{P}
\left(
Y_{k,t+1}=1
\mid
Y_{k,t}=1,\mathbf{c}_k,L_{k,1:t}
\right),
\end{equation}
where \(L_{k,1:t}=(v_{k,1},\dots,v_{k,t})\). 
This quantity represents the probability that the user continues beyond position \(t\), conditioned on having reached that position and observed the prefix \(L_{k,1:t}\). The corresponding conditional in-list survival probability is defined as
\begin{equation}
S_{k,t}(L_k\mid \mathbf{c}_k)
=
\mathbb{P}
\left(
Y_{k,t}=1
\mid
A_k=1,\mathbf{c}_k,L_{k,1:t-1}
\right).
\end{equation}
Note that $S_{k,1} = 1$ and
\begin{equation}
S_{k,t}(L_k\mid \mathbf{c}_k)
=
\prod_{i=1}^{t-1}
q_{k,i}(L_k\mid \mathbf{c}_k),
\quad t=2,\dots,T+1.
\end{equation}

Combining request-entry survival and in-list survival, the session-wise probability of reaching position \(t\) in the current list is
\begin{equation}
S^{\mathrm{sess}}_{k,t}(L_k\mid \mathcal{H}^{-}_k,\mathbf{c}_k)
=
q_{k,0}
S_{k,t}(L_k\mid \mathbf{c}_k)=
\prod_{i=0}^{t-1}
q_{k,i}(L_k\mid \mathbf{c}_k).
\end{equation}

\begin{definition}[Conditional Reward]
Let \(R_{k,t}\in\mathbb{R}_{+}\) denote the reward generated by the item \(v_{k,t}\) at position \(t\) of the current list. Define the conditional reward as
\begin{equation}
r_{k,t}(L_k\mid \mathbf{c}_k)
=
\mathbb{E}
\left[
R_{k,t}
\mid
Y_{k,t}=1,\mathbf{c}_k,L_{k,1:t}
\right].
\end{equation}
In practice, \(R_{k,t}\) may correspond to a single feedback signal or a weighted combination of multiple engagement signals.
\end{definition}

Based on the above survival and reward variables, we define the value of the current request-level list. 
\begin{definition}[Conditional Current-list Value]
The conditional value of list \(L_k\) at request \(k\), conditioned on the current request being reached, is defined as
\begin{equation}
V_k(L_k\mid \mathbf{c}_k)
=
\sum_{t=1}^{T}
S_{k,t}(L_k\mid \mathbf{c}_k)
r_{k,t}(L_k\mid \mathbf{c}_k).
\end{equation}
This value measures the expected value generated within the current list, where each position-wise reward is weighted by the probability that the user reaches that position.
\end{definition}

\begin{definition}[Session-wise Current-list Contribution]
The session-wise contribution of list \(L_k\) is defined as
\begin{equation}\label{eq:V_sess}
\begin{aligned}
V^{\mathrm{sess}}_k(L_k\mid \mathcal{H}^{-}_k,\mathbf{c}_k)
&=
q_{k,0}
V_k(L_k\mid \mathbf{c}_k)\\&=
\sum_{t=1}^{T}
S^{\mathrm{sess}}_{k,t}(L_k\mid \mathcal{H}^{-}_k,\mathbf{c}_k)
r_{k,t}(L_k\mid \mathbf{c}_k).
\end{aligned}
\end{equation}
\end{definition}

The session-wise value \(V^{\mathrm{sess}}_k\) additionally accounts for the probability that request \(k\) is reached from the preceding session prefix.
\section{Methodology}
\label{sec:methodology}

In this section, we describe how SWIM utilizes these quantities to be parameterized, trained, and used for online re-ranking. As shown in Fig.~\ref{fig:SWIM}, the SWIM evaluator estimates the session-wise value contributed by the current request-level chunk in Eq.~\ref{eq:V_sess}. At request \(k\), the generator proposes a set of candidate lists
\[
\mathcal{L}_k=\{L_k^{(1)},L_k^{(2)},\dots,L_k^{(M)}\}.
\]
SWIM acts as the evaluator and assigns a score to each candidate list. During training, SWIM learns the session-wise score through probability and conditional reward supervision,
\begin{equation}
\hat V^{\mathrm{sess}}_k(L_k^{(m)}\mid \mathcal{H}^{-}_k,\mathbf{c}_k)
=
\hat q_{k,0}
\hat V_k(L_k^{(m)}\mid \mathbf{c}_k).
\end{equation}
For online inference, however, the re-ranking model is invoked after request \(k\) has already been sent. Therefore, the local decision is set to \(q_{k,0}=1\) and the final exposed list is therefore simplified by
\begin{equation}
L_k^{*}
=
\arg\max_{L_k^{(m)}\in\mathcal{L}_k}
\hat V_k(L_k^{(m)}\mid \mathbf{c}_k).
\end{equation}

\subsection{Prefix Encoder}

For each candidate list proposed by the generator, SWIM encodes the observable prefix features and the candidate list into hidden states. Let $\mathbf{g}_k=\mathrm{Encoder}(\mathcal{H}^{-}_k)$ denote a prefix representation constructed from user history, request index, previous request feedback, cumulative in-session engagement, and contextual features.

To unify request-entry modeling and in-list survival modeling, we introduce a prefix token at position \(0\). 
This token represents the session prefix before the current list is entered:
\begin{equation}
x_{k,0}
=
\mathrm{Proj}
\left(
\mathbf{g}_k
\right),
\end{equation}
where $\mathrm{Proj}$ is a projection network that maps features into hidden dimension \(d\). For each item embedding \(v_{k,t}\) in the candidate list, we construct the item representation as
\begin{equation}
x_{k,t}
=
\mathrm{Proj}
\left(
[
v_{k,t};
\mathbf{g}_k;
\mathbf{e}^{\mathrm{pos}}_t;
\mathbf{c}_{k,t}
]
\right),
\quad t=1,\dots,T.
\end{equation}
where \(\mathbf{e}^{\mathrm{pos}}_t\) is the position embedding. Stacking the prefix token and all item representations gives
\begin{equation}
x_k
=
[x_{k,0},x_{k,1},\dots,x_{k,T}]^\top
\in\mathbb{R}^{(T+1)\times d}.
\end{equation}

Since \(q_{k,t}\) and \(r_{k,t}\) are defined with respect to the prefix \(L_{k,1:t}\), SWIM utilizes a \textbf{causal-masked Transformer} to encode prefix-aware list representations. 
The prefix token at position \(0\) is used to estimate the boundary survival probability \(q_{k,0}\), while the item token at position \(t\) is used to estimate \(q_{k,t}\) and \(r_{k,t}\). To ensure prefix-causal encoding, we use a lower-triangular causal attention mask
\begin{equation}
\mathbf{M}_{i,j}
=
\begin{cases}
0, & j\le i,\\
-\infty, & j>i,
\end{cases}
\quad i,j\in\{0,\dots,T\}.
\end{equation}
The masked attention is then computed as
\begin{equation}
\mathrm{Attn}(\mathbf{Q},\mathbf{K},\mathbf{V})
=
\mathrm{softmax}
\left(
\frac{\mathbf{Q}\mathbf{K}^{\top}}{\sqrt{d}}
+
\mathbf{M}
\right)
\mathbf{V}.
\end{equation}

After the Transformer layers, we obtain the hidden states
\begin{equation}
\mathbf{H}_k
=
[\mathbf{h}_{k,0},\mathbf{h}_{k,1},\dots,\mathbf{h}_{k,T}]^\top.
\end{equation}
where \(\mathbf{h}_{k,0}\) summarizes session prefixs before current request, and \(\mathbf{h}_{k,t}\) summarizes decision states and the list prefix \(L_{k,1:t}\) for \(t\ge 1\).

\subsection{Survival Estimation}
For each position \(t=0,\dots,T\), SWIM estimates the step-wise survival probability by applying a shared sigmoid probability head:
\begin{equation}
\hat q_{k,t}
=
\sigma
\left(
\mathrm{FFN}_{\mathrm{prob}}(\mathbf{h}_{k,t})
\right),
\quad t=0,\dots,T.
\end{equation}
Here, \(\hat q_{k,0}\) is predicted from the prefix hidden state \(\mathbf{h}_{k,0}\), and \(\hat q_{k,t}\) for \(t\ge 1\) is predicted from the list-prefix hidden state \(\mathbf{h}_{k,t}\). The session-wise survival probability of reaching position \(t\) in the current list is recovered by cumulative products:
\begin{equation}
\hat S^{\mathrm{sess}}_{k,t}
=
\prod_{i=0}^{t-1}
\hat q_{k,i},
\quad t=1,\dots,T+1.
\end{equation}
For each logged request, let \(y_{k,t}\in\{0,1\}\) denote whether the user reaches position \(t\). We define \(y_{k,0}=1\) by convention, since the session prefix exists. The boundary label \(y_{k,1}\) indicates whether request \(k\) is reached from the preceding prefix. We set \(y_{k,1}=1\) for observed requests and \(y_{k,1}=0\) for a sampled terminal request, padded to length \(T\) if necessary. For \(t\ge 1\), \(y_{k,t+1}\) indicates whether the user continues beyond position \(t\). The loss is defined as
\begin{equation}
\mathcal{L}_{\mathrm{prob}}
=
\sum_{k}
\sum_{t=0}^{T}
y_{k,t}
\,
\operatorname{BCE}
\left(
y_{k,t+1},
\hat q_{k,t}
\right),
\end{equation}
which ensures that the survival loss at position \(t\) is valid only when the user has reached that position. 
For \(t=T\), the label \(y_{k,T+1}\) indicates whether the user finishes the current list and proceeds beyond the current chunk.

\subsection{Conditional Reward Estimation}

For each reached position, SWIM estimates the conditional reward generated by the corresponding item. 
Reward prediction is defined only for item positions \(t=1,\dots,T\), not for the prefix token at \(t=0\).

For binary engagement objectives such as click, like, follow, or share, we use target-specific sigmoid heads:
\begin{equation}
\hat r^{(m)}_{k,t}
=
\sigma
\left(
\mathrm{FFN}^{(m)}_{\mathrm{rwd}}(\mathbf{h}_{k,t})
\right),
\quad t=1,\dots,T,
\end{equation}
where \(m\) indexes a specific engagement target. For continuous objectives such as play time, direct regression is usually unstable due to long-tailed distributions. We therefore discretize the target into \(B\) ordered buckets and predict a categorical distribution:
\begin{equation}
\hat{\boldsymbol{\pi}}_{k,t}
=
\mathrm{softmax}
\left(
\mathrm{FFN}_{\mathrm{pt}}(\mathbf{h}_{k,t})
\right)
\in\mathbb{R}^{B}.
\end{equation}
Let \(\{w_b\}_{b=1}^{B}\) denote bucket representatives. 
The conditional expected play-time is reconstructed as 
\begin{equation}
\hat r^{\mathrm{pt}}_{k,t}=\sum_{b=1}^{B}w_b\hat{\pi}_{k,t,b}.
\end{equation}
When continuous and binary feedback are both used, \(\hat r_{k,t}\) is obtained by combining all selected reward heads with task-specific weights.

Reward heads are trained only on reached positions, since \(r_{k,t}\) is defined as a conditional reward given that the user reaches position \(t\). 
For a binary feedback target \(m\), let \(z^{(m)}_{k,t}\in\{0,1\}\) denote the observed feedback label. 
The binary reward loss is
\begin{equation}
\mathcal{L}^{(m)}_{\mathrm{rwd}}
=
\sum_{k}
\sum_{t=1}^{T}
y_{k,t}
\,
\mathrm{BCE}
\left(
z^{(m)}_{k,t},
\hat r^{(m)}_{k,t}
\right).
\end{equation}
For bucketized play-time prediction, let \(b_{k,t}\in\{1,\dots,B\}\) be the bucket label. 
The corresponding loss is
\begin{equation}
\mathcal{L}_{\mathrm{pt}}
=
\sum_{k}
\sum_{t=1}^{T}
y_{k,t}
\,
\mathrm{CE}
\left(
b_{k,t},
\hat{\boldsymbol{\pi}}_{k,t}
\right).
\end{equation}

We combine all reward losses as
\begin{equation}
\mathcal{L}_{\mathrm{rwd}}
=
\sum_{m=1}^{M}
\alpha_m
\mathcal{L}^{(m)}_{\mathrm{rwd}}
+
\alpha_{\mathrm{pt}}
\mathcal{L}_{\mathrm{pt}}.
\end{equation}

\subsection{List Scoring and Training Objective}

Given the estimated survival probabilities and conditional rewards, SWIM computes the session-wise current-list contribution as
\begin{equation}
\hat V^{\mathrm{sess}}_k(L_k\mid \mathcal{H}^{-}_k,\mathbf{c}_k)
=
\sum_{t=1}^{T}
\hat S^{\mathrm{sess}}_{k,t}
\hat r_{k,t}
=
\sum_{t=1}^{T}
\left(
\prod_{i=0}^{t-1}
\hat q_{k,i}
\right)
\hat r_{k,t}.
\end{equation}
This score includes the boundary transition \(\hat q_{k,0}\), and is used when session-wise request-entry calibration is available.

For online re-ranking, the evaluator is invoked only after request \(k\) has been triggered. Therefore, the boundary transition is fixed to \(\hat q_{k,0}=1\), and candidate lists are compared by the conditional value
\begin{equation}
\hat V_k(L_k\mid \mathbf{c}_k)
=
\sum_{t=1}^{T}
\hat S_{k,t}
\hat r_{k,t}
=
\sum_{t=1}^{T}
\left(
\prod_{i=1}^{t-1}
\hat q_{k,i}
\right)
\hat r_{k,t}.
\end{equation}
The final exposed list is selected as
\begin{equation}
L_k^{*}
=
\arg\max_{L_k^{(m)}\in\mathcal{L}_k}
\hat V_k(L_k^{(m)}\mid \mathbf{c}_k).
\end{equation}

The overall training objective contains only the survival probability loss and the conditional reward loss:
\begin{equation}
\mathcal{L}
=
\mathcal{L}_{\mathrm{prob}}
+
\mathcal{L}_{\mathrm{rwd}}.
\end{equation}
This objective directly supervises the two main components of SWIM: the boundary-augmented survival chain \(\hat q_{k,0:T}\) and the reached-position conditional rewards \(\hat r_{k,1:T}\). 

\section{Theoretical Results}
\label{sec:theory}

In this section, we theoretically analyze why the boundary-augmented survival parameterization of SWIM is suitable for session-supervised re-ranking. First, SWIM has a session-prefix representation advantage: by introducing the boundary transition \(q_0\), it can represent prefix-conditioned session-wise list contribution, while request-level evaluators like CAVE~\cite{zhang2025from} model the consumption value after the current request is entered.
Then, unlike CAVE's exit component, which is parameterized by a Weibull distribution, SWIM uses a distribution-free discrete survival parameterization that can represent any valid finite-horizon survival behavior.

For clarity, we omit the request index \(k\) in this section. We introduce \(Y_1=1\) indicates that the current request is reached from the preceding prefix. For \(t=1,\dots,T\), the browsing process is monotone, i.e., $Y_t=0 \Rightarrow Y_{t+1}=0$.
SWIM parameterizes the boundary-augmented continuation chain as
\begin{equation}
\begin{aligned}
q_0
&=
\mathbb{P}(Y_1=1\mid \mathcal{H}^{-}), \\
q_t
&=
\mathbb{P}
\left(
Y_{t+1}=1
\mid
Y_t=1,\mathcal{H}^{-},L_{1:t}
\right),\ t=1,\dots,T.
\end{aligned}
\end{equation}
SWIM estimates the session-wise current-list contribution as
\begin{equation}
V_{\mathrm{SWIM}}^{\mathrm{sess}}(L,\mathcal{H}^{-})
=
\sum_{t=1}^{T}
\left(
\prod_{i=0}^{t-1}q_i
\right)r_t.
\end{equation}
When the current request has been reached, the boundary transition is fixed to \(q_0=1\), and the request-level current-list value becomes
\begin{equation}
V_{\mathrm{SWIM}}^{\mathrm{req}}(L)
=
\sum_{t=1}^{T}
\left(
\prod_{i=1}^{t-1}q_i
\right)r_t.
\end{equation}

\subsection{Session-prefix Advantage}

We first show that the boundary transition \(q_0\) gives SWIM a representation advantage for session-wise list contribution. 
This advantage is independent of the particular neural architecture used to estimate \(q_0\). 
It follows from the fact that session-wise contribution depends not only on the value of the current list after the request is reached, but also on whether the request is reached from the preceding session prefix.

Let \(\mathbf{c}\) denote request-level information available to a request-only evaluator, such as the current candidate list and request-level context, and \(\mathcal{H}^{-}\) denote additional session-prefix information. Assume the true session-wise current-list contribution has the form
\begin{equation}
V_{\mathrm{sess}}^{*}
=
q_0^{*}(\mathcal{H}^{-},\mathbf{c})
V_{\mathrm{req}}^{*}(\mathbf{c}),
\end{equation}
where \(q_0^{*}(\mathcal{H}^{-},\mathbf{c})\in[0,1]\) is the true probability that the request is reached from the preceding prefix, and \(V_{\mathrm{req}}^{*}(\mathbf{c})\ge 0\) is the true request-level value conditioned on the current request being reached.

\begin{theorem}[Irreducible error of request-only evaluators]
Consider any request-only evaluator \(g(\mathbf{c})\) that does not depend on the preceding session prefix \(\mathcal{H}^{-}\). 
Under squared loss, the minimum achievable prediction risk for the session-wise target \(V_{\mathrm{sess}}^{*}\) is
\begin{equation}
\inf_g
\mathbb{E}
\left[
\left(
g(\mathbf{c})-V_{\mathrm{sess}}^{*}
\right)^2
\right]
=
\mathbb{E}
\left[
\mathrm{Var}
\left(
q_0^{*}(\mathcal{H}^{-},\mathbf{c})
V_{\mathrm{req}}^{*}(\mathbf{c})
\mid \mathbf{c}
\right)
\right].
\end{equation}
Equivalently,
\begin{equation}
\inf_g
\mathbb{E}
\left[
\left(
g(\mathbf{c})-V_{\mathrm{sess}}^{*}
\right)^2
\right]
=
\mathbb{E}
\left[
\left(
V_{\mathrm{req}}^{*}(\mathbf{c})
\right)^2
\mathrm{Var}
\left(
q_0^{*}(\mathcal{H}^{-},\mathbf{c})
\mid \mathbf{c}
\right)
\right].
\end{equation}
Therefore, if \(V_{\mathrm{req}}^{*}(\mathbf{c})>0\) and the boundary transition \(q_0^{*}\) varies across session prefixes with the same request-level information \(\mathbf{c}\), then every request-only evaluator has strictly positive irreducible error. SWIM can explicitly model the boundary transition \(q_0\) from \(\mathcal{H}^{-}\).
\end{theorem}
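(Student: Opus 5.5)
The plan is the standard $L^2$ projection argument: the conditional expectation $\mathbb{E}[V_{\mathrm{sess}}^{*}\mid \mathbf{c}]$ is the best request-only predictor, and its risk is the expected conditional variance. We treat $(\mathcal{H}^{-},\mathbf{c})$ as jointly distributed random elements. We assume $\mathbb{E}[(V_{\mathrm{sess}}^{*})^2]<\infty$, which follows from the paper's finiteness assumption on expected utility together with $q_0^{*}\in[0,1]$ if $V_{\mathrm{req}}^{*}$ is square-integrable. The infimum ranges over measurable functions $g$ of $\mathbf{c}$ with $\mathbb{E}[g(\mathbf{c})^2]<\infty$; other $g$ give infinite risk and do not affect the infimum.

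\textbf{Step 1: Pythagorean decomposition.} Let $m(\mathbf{c})=\mathbb{E}[V_{\mathrm{sess}}^{*}\mid\mathbf{c}]$. For any admissible $g$, write $g-V_{\mathrm{sess}}^{*}=(g-m)+(m-V_{\mathrm{sess}}^{*})$ and expand the square. The cross term $\mathbb{E}[(g(\mathbf{c})-m(\mathbf{c}))(m(\mathbf{c})-V_{\mathrm{sess}}^{*})]$ vanishes by the tower property, since $g-m$ is $\mathbf{c}$-measurable and $\mathbb{E}[m-V_{\mathrm{sess}}^{*}\mid\mathbf{c}]=0$. This gives
\[
\mathbb{E}\big[(g(\mathbf{c})-V_{\mathrm{sess}}^{*})^2\big]=\mathbb{E}\big[(g(\mathbf{c})-m(\mathbf{c}))^2\big]+\mathbb{E}\big[\mathrm{Var}(V_{\mathrm{sess}}^{*}\mid\mathbf{c})\big].
\]
The infimum is therefore attained at $g=m$ and equals $\mathbb{E}[\mathrm{Var}(q_0^{*}V_{\mathrm{req}}^{*}\mid\mathbf{c})]$, which is the first claimed identity.

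\textbf{Step 2: the equivalent form.} Because $V_{\mathrm{req}}^{*}(\mathbf{c})$ is $\mathbf{c}$-measurable, it factors out of the conditional variance as a squared constant: $\mathrm{Var}(q_0^{*}V_{\mathrm{req}}^{*}\mid\mathbf{c})=(V_{\mathrm{req}}^{*}(\mathbf{c}))^2\,\mathrm{Var}(q_0^{*}\mid\mathbf{c})$. The variance $\mathrm{Var}(q_0^{*}\mid\mathbf{c})$ is finite since $q_0^{*}\in[0,1]$. Taking expectations yields the second identity.

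\textbf{Step 3: strict positivity.} Suppose $V_{\mathrm{req}}^{*}(\mathbf{c})>0$ and $\mathrm{Var}(q_0^{*}\mid\mathbf{c})>0$ on a set of $\mathbf{c}$ of positive probability. Then the integrand $(V_{\mathrm{req}}^{*})^2\,\mathrm{Var}(q_0^{*}\mid\mathbf{c})$ is nonnegative and strictly positive on a set of positive measure, so its expectation is strictly positive. The main point requiring care is making precise the hypothesis that ``$q_0^{*}$ varies across session prefixes with the same $\mathbf{c}$'': it should be read as exactly this positive-probability condition, with the conditional variance defined through regular conditional distributions. Once that reading is fixed, the conclusion is immediate.

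Finally, SWIM's predictor is allowed to depend on $\mathcal{H}^{-}$ through $\hat q_0$. It can therefore represent $q_0^{*}(\mathcal{H}^{-},\mathbf{c})V_{\mathrm{req}}^{*}(\mathbf{c})$ exactly, with zero approximation error, and is not subject to this irreducible bound.
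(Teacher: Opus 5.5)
Your proposal is correct and follows the same route as the paper: the optimal request-only predictor is $\mathbb{E}[V_{\mathrm{sess}}^{*}\mid\mathbf{c}]$, its risk equals the expected conditional variance, and $V_{\mathrm{req}}^{*}(\mathbf{c})$ factors out of that variance as a squared constant. Your Pythagorean decomposition, the integrability assumptions, and the positive-probability reading of the ``varies across prefixes'' hypothesis make rigorous the steps the paper only asserts.
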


\begin{proof}
For any function \(g(\mathbf{c})\), the optimal predictor of \(V_{\mathrm{sess}}^{*}\) under squared loss is the conditional expectation
\begin{equation}
g^{*}(\mathbf{c})
=
\mathbb{E}
\left[
V_{\mathrm{sess}}^{*}\mid \mathbf{c}
\right].
\end{equation}
The corresponding minimum risk is the conditional variance:
\begin{equation}
\inf_g
\mathbb{E}
\left[
\left(
g(\mathbf{c})-V_{\mathrm{sess}}^{*}
\right)^2
\right]
=
\mathbb{E}
\left[
\mathrm{Var}
\left(
V_{\mathrm{sess}}^{*}\mid \mathbf{c}
\right)
\right].
\end{equation}
Substituting
\[
V_{\mathrm{sess}}^{*}
=
q_0^{*}(\mathcal{H}^{-},\mathbf{c})
V_{\mathrm{req}}^{*}(\mathbf{c})
\]
and noting that \(V_{\mathrm{req}}^{*}(\mathbf{c})\) is fixed when conditioning on \(\mathbf{c}\), we obtain
\begin{equation}
\mathrm{Var}
\left(
V_{\mathrm{sess}}^{*}\mid \mathbf{c}
\right)
=
\left(
V_{\mathrm{req}}^{*}(\mathbf{c})
\right)^2
\mathrm{Var}
\left(
q_0^{*}(\mathcal{H}^{-},\mathbf{c})
\mid \mathbf{c}
\right).
\end{equation}
Thus, whenever \(q_0^{*}\) has non-zero conditional variance given \(\mathbf{c}\), a request-only evaluator cannot eliminate the prediction error for the session-wise target. 
SWIM avoids this limitation by conditioning \(q_0\) on the preceding session prefix.
\end{proof}

This clarifies the difference between SWIM and request-level evaluators. SWIM fits session-wise signals that vary with previous browsing history, accumulated engagement, and user fatigue.

\subsection{Prior-free Survival Modeling}
\label{sec:theory_expressiveness}

We then compare the expressiveness of SWIM's discrete survival formulation with the stochastic exit prior used by CAVE. 
CAVE decomposes user exit probability into an interest-driven component and a stochastic component, where the stochastic component is modeled by a continuous-time Weibull distribution to capture external factors~\cite{zhang2025from}. 
This Weibull component provides a useful inductive bias when the true stochastic exit behavior is smooth and Weibull-like. 
However, it also restricts the stochastic survival pattern to a low-dimensional parametric family.

On a finite list horizon, the Weibull survival curve sampled at discrete positions can be written as
\begin{equation}
S_t^{W}(\lambda,\kappa)
=
\exp\left(
-\left(\frac{t-1}{\lambda}\right)^{\kappa}
\right),
\quad \lambda>0,\ \kappa>0,
\end{equation}
where \(\lambda\) is the scale parameter and \(\kappa\) is the shape parameter. 
The induced step-wise continuation probability is
\begin{equation}
q_t^{W}(\lambda,\kappa)
=
\frac{S_{t+1}^{W}(\lambda,\kappa)}
{S_t^{W}(\lambda,\kappa)}
=
\exp\left(
-\frac{t^{\kappa}-(t-1)^{\kappa}}{\lambda^{\kappa}}
\right).
\end{equation}
The Weibull exit prior defines a two-dimensional survival manifold
\begin{equation}
\mathcal{M}_{W}
=
\left\{
q^{W}_{1:T}(\lambda,\kappa):
\lambda>0,\kappa>0
\right\}
\subset [0,1]^{T}.
\end{equation}

In contrast, SWIM directly parameterizes the discrete continuation sequence
\begin{equation}
q_{0:T}=(q_0,q_1,\dots,q_T)\in[0,1]^{T+1},
\end{equation}
where \(q_0\) is the boundary transition from the session prefix to the current request and \(q_t\), \(t\ge 1\), is the in-list continuation probability. 
Therefore, SWIM is not restricted to any pre-specified time-to-exit family. 
It can represent arbitrary finite-horizon continuation patterns, including position-irregular drops, abrupt contextual disruptions, repeated content, or session-stage-specific fatigue.

\begin{theorem}[Expressiveness gap of Weibull-restricted survival]
Let \(q^*_{1:T}\in[0,1]^{T}\) be the data-generating continuation process, and let $S_t^*
=\prod_{i=1}^{t-1}q_i^*$ be the corresponding survival profile. Let \(q^W_{1:T}\in\mathcal{M}_W\) be the best Weibull-restricted projection of \(q^*_{1:T}\) under a chosen fitting criterion, and define the local approximation residual
\begin{equation}
\delta_j=q_j^* - q_j^W.
\end{equation}
Then for any $t$, the discrepancy between the true survival profile and the Weibull-restricted survival profile admits the exact expansion
\begin{equation}
S_t^* - S_t^W
=
\sum_{j=1}^{t-1}
\delta_j
S_j^*
\frac{S_t^W}{S_{j+1}^W}.
\label{eq:survival_product_difference}
\end{equation}
Consequently, for the current-list utility
\begin{equation}
V(q)
=
\sum_{t=1}^{T}S_t(q)r_t,
\end{equation}
the value estimation discrepancy satisfies
\begin{equation}
V(q^*)-V(q^W)
=
\sum_{j=1}^{T-1}
\delta_j
\frac{S_j^*}{S_{j+1}^W}
\left(
\sum_{t=j+1}^{T}
S_t^W r_t
\right).
\label{eq:value_bias_decomposition}
\end{equation}
\end{theorem}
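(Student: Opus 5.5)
The plan is a standard telescoping identity for differences of products, followed by an exchange of summation order. Define the hybrid survival quantities
\[
P_j \;=\; \Bigl(\prod_{i=1}^{j} q_i^*\Bigr)\Bigl(\prod_{i=j+1}^{t-1} q_i^W\Bigr),\qquad j=0,\dots,t-1,
\]
for fixed \(t\ge 2\). These interpolate between \(P_0=S_t^W\) and \(P_{t-1}=S_t^*\), where empty products equal one. Telescoping gives \(S_t^*-S_t^W=\sum_{j=1}^{t-1}(P_j-P_{j-1})\).

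Consecutive terms differ only in the factor at index \(j\). Hence
\[
P_j-P_{j-1}=\Bigl(\prod_{i=1}^{j-1}q_i^*\Bigr)(q_j^*-q_j^W)\Bigl(\prod_{i=j+1}^{t-1}q_i^W\Bigr).
\]
The first factor is \(S_j^*\) by definition of the survival profile, and the middle factor is \(\delta_j\).

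For the last factor, I will use \(\prod_{i=j+1}^{t-1}q_i^W=S_t^W/S_{j+1}^W\). This step is the only place needing care, since it requires \(S_{j+1}^W>0\). It holds for the Weibull family because \(S_t^W(\lambda,\kappa)=\exp(-((t-1)/\lambda)^\kappa)>0\) for all \(\lambda,\kappa>0\), so the ratio is well defined. This yields \eqref{eq:survival_product_difference} for \(t\ge 2\). For \(t=1\), both sides vanish, since \(S_1^*=S_1^W=1\) and the sum is empty, so the expansion holds for every \(t\).

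For the value discrepancy, I will first write \(V(q^*)-V(q^W)=\sum_{t=1}^T (S_t^*-S_t^W)r_t\) and drop the zero \(t=1\) term. Next I substitute \eqref{eq:survival_product_difference} to get the double sum \(\sum_{t=2}^{T}\sum_{j=1}^{t-1}\delta_j S_j^* (S_t^W/S_{j+1}^W) r_t\). Exchanging the order over the triangle \(1\le j<t\le T\) turns this into \(\sum_{j=1}^{T-1}\sum_{t=j+1}^{T}\). The factor \(\delta_j S_j^*/S_{j+1}^W\) does not depend on \(t\), so pulling it out gives \eqref{eq:value_bias_decomposition}.

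Beyond the positivity of the Weibull survival, which justifies the division, nothing depends on how \(q^W\) was fitted. The identity is purely algebraic and holds for any Weibull point in \(\mathcal{M}_W\).
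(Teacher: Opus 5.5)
Your proposal is correct and matches the paper's proof: your hybrid products $P_j$ are the paper's $B_{j+1}$, and the paper likewise telescopes over the single-factor swaps and then exchanges the order of summation over $1\le j<t\le T$. You also check two points the paper leaves implicit, namely that $S_{j+1}^W>0$ for every Weibull point (so the ratio is defined) and that the $t=1$ case holds trivially.
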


\begin{proof}
We first prove the survival discrepancy identity. 
For a fixed \(t\), define
\begin{equation}
B_j
=
\left(
\prod_{i=1}^{j-1}q_i^*
\right)
\left(
\prod_{i=j}^{t-1}q_i^W
\right),
\quad j=1,\dots,t.
\end{equation}
Then $B_1=S_t^W, B_t=S_t^*$. By telescoping,
\begin{equation}
S_t^*-S_t^W=B_t-B_1=\sum_{j=1}^{t-1}(B_{j+1}-B_j).
\end{equation}
For each \(j\),
\begin{equation}
B_{j+1}-B_j
=
(q_j^*-q_j^W)
\left(
\prod_{i=1}^{j-1}q_i^*
\right)
\left(
\prod_{i=j+1}^{t-1}q_i^W
\right).
\end{equation}
Using \(\delta_j=q_j^*-q_j^W\), we have
\[
\prod_{i=1}^{j-1}q_i^*=S_j^*,\quad
\prod_{i=j+1}^{t-1}q_i^W
=
\frac{S_t^W}{S_{j+1}^W},
\]
we obtain
\begin{equation}
S_t^*-S_t^W
=
\sum_{j=1}^{t-1}
\delta_j
S_j^*
\frac{S_t^W}{S_{j+1}^W}.
\end{equation}
Then we substitute this identity into the value difference:
\begin{align}
V(q^*)-V(q^W)
=
\sum_{t=1}^{T}
(S_t^*-S_t^W)r_t =
\sum_{t=1}^{T}
\left(
\sum_{j=1}^{t-1}
\delta_j
S_j^*
\frac{S_t^W}{S_{j+1}^W}
\right)
r_t.
\end{align}
Exchanging the order of summation gives
\begin{equation}
V(q^*)-V(q^W)
=
\sum_{j=1}^{T-1}
\delta_j
\frac{S_j^*}{S_{j+1}^W}
\left(
\sum_{t=j+1}^{T}
S_t^W r_t
\right),
\end{equation}
which proves Eq.~\eqref{eq:value_bias_decomposition}. 
\end{proof}

The theorem reveals how local Weibull misspecification propagates into list evaluation. 
A local approximation residual \(\delta_j\) at step \(j\) is not isolated: it is weighted by the downstream residual value
$\sum_{t=j+1}^{T}S_t^W r_t$. Therefore, an early survival mismatch can affect the estimated contribution of all downstream positions. 
When the true continuation pattern contains non-Weibull deviations, such as abrupt position-level drop-offs or prefix-dependent irregularities, the low-dimensional Weibull manifold \(\mathcal{M}_W\) cannot generally set all residuals \(\delta_j\) to zero simultaneously.

This can be formalized as a representation-level approximation gap. The Weibull family \(\mathcal{M}_W\) is a two-dimensional smooth manifold embedded in the \((T+1)\)-dimensional discrete continuation space \([0,1]^{T+1}\). For \(T+1>2\), \(\mathcal{M}_W\) is a strict lower-dimensional subset which cannot cover all valid finite-horizon continuation sequences. Consequently, for a generic continuation process \(q^*_{0:T}\notin\mathcal{M}_W\), every Weibull-restricted estimator has non-zero structural residuals.

SWIM avoids this approximation bias by operating in the full finite-horizon discrete survival space. 
For any valid survival profile
\begin{equation}
1=S_0\ge S_1\ge\cdots\ge S_{T+1}\ge 0,
\end{equation}
there exists a discrete continuation sequence \(q_{0:T}\in[0,1]^{T+1}\) s.t.
\begin{equation}
S_t
=
\prod_{i=0}^{t-1}q_i,
\quad t=1,\dots,T+1.
\end{equation}
Specifically, whenever \(S_t>0\), one can set
\begin{equation}
q_t
=
\frac{S_{t+1}}{S_t};
\end{equation}
if \(S_t=0\), downstream \(q_t\) values can be chosen arbitrarily because all later survival probabilities are already zero. 
Thus, in the finite-horizon setting considered by request-level re-ranking, SWIM can represent any valid discrete survival profile.

This does not imply that the Weibull prior is always harmful. 
When the true stochastic exit behavior is indeed smooth and Weibull-like, the parametric prior may provide useful regularization and improve sample efficiency. 
However, SWIM has a strictly more flexible structural parameterization when user continuation is
driven by irregular effects. 
It can reduce the parametric misspecification risk introduced by a fixed Weibull stochastic exit prior, while remaining fully compatible with session-supervised list evaluation.
\section{Experiments}
\label{sec:experiments}

The experiments are designed to answer the following questions:

\begin{itemize}
    \item \textbf{RQ1:} Does the proposed evaluator (SWIM) improve re-ranking performance compared with representative pointwise, listwise, G-only, and G-E baselines?
    \item \textbf{RQ2:} How important are the key components of SWIM, including prefix conditioning,  survival modeling, and conditional reward modeling?
    \item \textbf{RQ3:} Can SWIM bring measurable online improvements in a real industrial system?
\end{itemize}
The code is available at \url{https://github.com/yuanhao53/SWIM}.

\subsection{Datasets}
We evaluate SWIM empirically on RecFlow and KuaiRand, and conduct online experiments on Kuaishou's real traffic. As public datasets lack complete prefix features, we select RecFlow and KuaiRand for their maximal coverage of required signals among available benchmarks (Table~\ref{tab:dataset_feature_coverage}).

\textbf{RecFlow}~\cite{liu2025recflow} is a public industrial recommendation dataset collected from a real-world recommendation pipeline. 
It provides request-level or stage-level recommendation signals, candidate items, user feedback, and multi-type engagement labels. 
Since RecFlow contains relatively rich request-level information, it is the closest public dataset to the generator-evaluator re-ranking setting considered in this work.

\textbf{KuaiRand}~\cite{gao2022kuairand} is a public short-video recommendation dataset. It contains user behavior sequences, timestamps, user/item features, random exposure signals, and multiple user feedback signals. 

\textbf{Kuaishou}'s online experiment is conducted on Kuaishou APP with more than 400 million DAUs. Different from public datasets, it contains real online request-level candidate sets, generated slates, item positions, previous-request feedback, session-prefix features, multi-objective rewards, and online serving logs. Therefore, it provides the most faithful evaluation environment for SWIM.

\begin{table}[t]
\centering
\caption{Availability of SWIM-required signals in public datasets. 
\(\checkmark\) indicates directly available, \(\triangle\) indicates partially available or reconstructed, and \(\times\) indicates unavailable.}
\label{tab:dataset_feature_coverage}
\resizebox{\linewidth}{!}{
\begin{tabular}{lcc}
\toprule
\textbf{SWIM-required signal} & \textbf{RecFlow} & \textbf{KuaiRand} \\
\midrule
User ID / user history & \(\checkmark\) & \(\checkmark\) \\
Request-level grouping & \(\checkmark\) & \(\triangle\) \\
Ordered list & \(\checkmark\) & \(\triangle\) \\
Candidate set for re-ranking & \(\checkmark\) & \(\triangle\) \\
Multi-stage serving information & \(\checkmark\) & \(\times\) \\
Previous-request feedback & \(\triangle\) & \(\triangle\) \\
Cumulative in-session engagement & \(\triangle\) & \(\triangle\) \\
In-list continuation signal \(q_{k,t}\) & \(\checkmark\) & \(\triangle\) \\
Binary feedback, e.g., click/like/share & \(\checkmark\) & \(\checkmark\) \\
Continuous feedback, e.g., play/view time & \(\checkmark\) & \(\checkmark\) \\
\bottomrule
\end{tabular}
}
\end{table}

\begin{table}[t]
\centering
\caption{Dataset statistics after task construction.}
\label{tab:dataset_statistics}
\resizebox{\linewidth}{!}{
\begin{tabular}{lcccccc}
\toprule
\textbf{Dataset} & \textbf{\#User} & \textbf{\#Session} & \textbf{\#Requests} & \textbf{\#Items} & \textbf{Cand. size} & \textbf{List length} \\
\midrule
RecFlow & 34,574 & 965,634 & 3,308,233 & 14,181,768 & 120 & 6 \\
KuaiRand & 27,285 & 3,339,426 & 54,805,516 & 31,922,917 & 6 & 6 \\ 
\bottomrule
\end{tabular}
}
\end{table}

\subsection{Task Construction}

At each request \(k\), a re-ranking model receives a candidate set \(\mathcal{V}_k\) and outputs an ordered list \(L_k=(v_{k,1},\dots,v_{k,T})\). In all datasets, we set the target list length to \(T=6\).

For \textbf{RecFlow}, we directly use the candidate set provided by the upstream ranking stage. 
Each request contains \(|\mathcal{V}_k|=120\) candidate items, and the model selects and orders a length-6 list from this candidate set. This setting is closest to the standard generator-evaluator re-ranking protocol.
For \textbf{KuaiRand}, the upstream candidate set is not available. 
We therefore segment each user's behavior sequence into sessions according to timestamp and further split each session into fixed-length request lists. Since only the observed interaction sequence is available, we construct a \(6\)-to-\(6\) re-ordering task: each list contains six observed items, and the model evaluates different orderings of these items. This setting is weaker than full candidate-set re-ranking, but it still allows us to evaluate whether SWIM can better model in-list continuation and reward under sequential short-video consumption.

For \textbf{Kuaishou}'s online experiments, we directly use the candidate sets, generated lists, item positions, and request-level context features from the production serving logs.

During training, we use all available feedback signals to provide supervision. For each reached item position, we construct binary targets for engagements such as video view, click, like, follow, comment, or share when available. For continuous feedback such as play time, we discretize the signal into several ordered buckets and use the bucket label to train the play-time prediction head. 

\subsection{Baselines}

We compare SWIM with representative baselines from pointwise ranking, listwise re-ranking, G-only re-ranking, and G-E re-ranking:

\begin{itemize}
    \item \textbf{DNN} independently predicts the score of each candidate item and outputs the final list obtained by sorting.
    \item \textbf{Seq2Slate}~\cite{bello2018seq2slate} formulates a seq-to-seq slate generation problem which generates the output list autoregressively.
    \item\textbf{DLCM}~\cite{ai2018learning} encodes the local ranking context of the input list and adjusts item scores accordingly. 
    \item\textbf{SetRank}~\cite{pang2020setrank} uses self-attention to model interactions among candidates and learns permutation-invariant listwise scores. 
    \item \textbf{PRM}~\cite{pei2019personalized} is a Transformer-based re-ranking model that captures item-item interactions within the candidate list.
    \item \textbf{SORT-Gen}~\cite{meng2025generative} 
    is an efficient generative re-ranking model which uses a Sequential Ordered Regression Transformer to estimate multi-objective values for variable-length sub-lists, and a mask-driven fast generation algorithm to produce the final list efficiently. We implement a SORT-Gen-style generative re-ranking baseline following the main design. 
    \item \textbf{NAR4Rec}~\cite{ren2024nonautoregressive} is a non-autoregressive generative re-ranking model that generates the target list in parallel rather than decoding items one by one. 
    \item \textbf{PIER}~\cite{shi2023pier} is a permutation-level interest-based end-to-end re-ranking framework. 
    It follows a two-stage architecture: a permutation selection module first generates candidate permutations according to permutation-level user interest, and a context-aware prediction module then evaluates the generated permutations. 
    \item \textbf{MultiG}~\cite{yang2025comprehensive} is a multi-generator re-ranking framework that uses multiple complementary generators to produce more comprehensive and diverse candidate lists before evaluation. 
    \item \textbf{CAVE}~\cite{zhang2025from} is a consumption-aware list value estimation Evaluator for generator-evaluator re-ranking. 
    It considers that users may exit before consuming the full generated list and estimates list value by weighting sub-list values with user-specific exit probabilities. Both CAVE and SWIM utilize the same Generator architecture as MultiG.

\end{itemize}

For fair comparison, all methods follow the same data construction and evaluation protocol. 
We split each dataset by request timestamp, using the earliest \(90\%\) of requests for training and the latest \(10\%\) for prediction/testing. 
This temporal split prevents future information leakage and mimics the online serving setting.  For G-E methods including MultiG, CAVE and SWIM, we utilize a combination of Generators including DNN, Seq2Slate, PRM and NAR4Rec, which produces a total of 20 candidate lists (5 of each) with length $T = 6$ from the same candidate pool (120 for RecFlow and 6 for KuaiRand), then the Evaluator scores all of them and output Top-1. 

\subsection{Evaluation Metrics}

We report two offline metrics computed w.r.t. the video-view label.

\textbf{NDCG}(@6) evaluates the quality of the final ordered list. 
For each request, we use the binary video-view label \(z^{\mathrm{view}}\) as the relevance signal and compute NDCG over the $n=6$ positions. 
This metric measures whether items that receive positive video-view feedback are ranked higher in the final list.

\textbf{AUC} evaluates the model's ability to distinguish video-view-positive and video-view-negative candidate items. 
For each request, candidate items are scored by the corresponding model, and AUC is computed using \(z^{\mathrm{view}}\) as the binary label. 
For generative or generator-evaluator methods that output an ordered list, we use the scalar ranking or evaluator score used to produce the final ordering.

\subsection{Main Results}

Table~\ref{tab:main_results} reports the offline re-ranking performance on RecFlow and KuaiRand. 
SWIM achieves the best performance on both datasets and both metrics, consistently outperforming pointwise, listwise, G-only, and G-E baselines. These results indicate that explicitly modeling user continuation and reached-position conditional reward provides a stronger evaluator for request-level re-ranking.

\begin{table*}[t]
\centering
\caption{Offline re-ranking performance on RecFlow and KuaiRand dataset. 
The best result is shown in bold.}
\label{tab:main_results}
\begin{tabular}{lcccc}
\toprule
\multirow{2}{*}{\textbf{Model}}
& \multicolumn{2}{c}{\textbf{RecFlow}}
& \multicolumn{2}{c}{\textbf{KuaiRand}} \\
\cmidrule(lr){2-3}
\cmidrule(lr){4-5}
& \textbf{NDCG} & \textbf{AUC}
& \textbf{NDCG} & \textbf{AUC}\\
\midrule
DNN       & $0.1584 \pm 0.0004$ & $0.6591 \pm 0.0003$ &	$0.6907 \pm 0.0003$ & $0.7150 \pm 0.0003$ \\
Seq2Slate & $0.1693 \pm 0.0023$ & $0.6732 \pm 0.0021$ & $0.6984 \pm 0.0018$ & $0.7184 \pm 0.0020$  \\
DLCM      & $0.1747 \pm 0.0006$ & $0.6771 \pm 0.0003$ & $0.7025 \pm 0.0002$ & $0.7252 \pm 0.0003$ \\
SetRank   & $0.1823 \pm 0.0006$ & $0.6845 \pm 0.0009$ & $0.7095 \pm 0.0012$ & $0.7481 \pm 0.0014$  \\
PRM       & $0.1841 \pm 0.0005$ & $0.6866 \pm 0.0008$ & $0.7132 \pm 0.0008$ & $0.7554 \pm 0.0007$  \\
SORT-Gen       & $0.1892 \pm 0.0007$ & $0.6973 \pm 0.0011$ & $0.7204 \pm 0.0010$ & $0.7632 \pm 0.0016$  \\
NAR4Rec   & $0.1793 \pm 0.0011$ & $0.6816 \pm 0.0017$ & $0.7076 \pm 0.0020$ & $0.7358 \pm 0.0016$  \\
PIER   & $0.1873 \pm 0.0014$ & $0.6907 \pm 0.0017$ & $0.7186 \pm 0.0025$ & $0.7562 \pm 0.0018$  \\
MultiG    & $0.1911 \pm 0.0018$ & $0.7024 \pm 0.0014$ & $0.7231 \pm 0.0019$ & $0.7677 \pm 0.0012$  \\
\midrule
CAVE      & $0.1957 \pm 0.0013$ & $0.7083 \pm 0.0016$ & $0.7313 \pm 0.0016$ & $0.7756 \pm 0.0010$  \\
\textbf{SWIM}       & $\mathbf{0.2031 \pm 0.0015}$ & $\mathbf{0.7185 \pm 0.0017}$ & $\mathbf{0.7375 \pm 0.0011}$ & $\mathbf{0.7804 \pm 0.0013}$
           \\
\bottomrule
\end{tabular}
\end{table*}

\textbf{Analysis.}
According to the offline results, listwise re-ranking methods such as DLCM, SetRank, and PRM consistently outperform the pointwise DNN baseline, confirming that modeling cross-item dependencies and positional context is important for listwise re-ranking. Besides, recent generative and generator-evaluator methods, including PIER, SORT-Gen, and MultiG, further improve by exploring richer permutation spaces or by using stronger list-level evaluators. CAVE achieves the strongest performance among all baselines, suggesting that explicitly considering user early-exit behavior is beneficial for list evaluation. SWIM further improves over CAVE, which meets our theoretical understandings of its advantage.

Besides, we also observe that the improvement on RecFlow is larger than that on KuaiRand. 
This is expected because RecFlow provides richer request-level and candidate-set information, making it closer to the generator-evaluator re-ranking scenario considered by SWIM. 
KuaiRand provides fewer request-level serving signals and requires reconstructed request chunks, but SWIM still obtains consistent gains, indicating that survival-weighted reward aggregation remains effective even under weaker data approximations.

\subsection{Ablation Study}

We conduct ablation studies on RecFlow to analyze the contribution of SWIM's main components. 
Specifically, we examine whether the performance gain comes from session-prefix modeling, boundary transition modeling, or the survival-based decomposition itself.

\begin{itemize}
    \item \textbf{w/o Prefix Features}: remove the session-prefix features from the prefix encoder. The model still receives the current candidate list and item-side features, but cannot explicitly condition on the preceding session context.

    \item \textbf{w/o Boundary \(q_{k,0}\)}: remove the boundary transition from the session prefix to the current request. The survival chain starts from the first item in the current list, and the model only estimates in-list continuation probabilities \(q_{k,1:T}\).

    \item \textbf{w/o Survival}: remove the recursive survival chain and directly aggregate the predicted conditional rewards. This variant keeps the reward heads but does not weight each position by its estimated reachability probability, reducing SWIM to a reward-aggregation evaluator.
\end{itemize}

\begin{table}[t]
\centering
\caption{Ablation study on RecFlow dataset.}
\label{tab:ablation}
\begin{tabular}{lcc}
\toprule
\textbf{Variant} & \textbf{NDCG} & \textbf{AUC} \\
\midrule
w/o Prefix Features      & 0.1975 & 0.7093 \\
w/o Boundary \(q_{k,0}\) & 0.2014 & 0.7128 \\
w/o Survival   & 0.1952 & 0.7071 \\
\midrule
\textbf{SWIM}                & \textbf{0.2031} & \textbf{0.7185} \\
\bottomrule
\end{tabular}
\end{table}

\textbf{Analysis.}
Table~\ref{tab:ablation} shows that each component contributes to the final performance. 
The ablations verify the necessity of SWIM's design choices: prefix conditioning improves session-aware representation, the boundary transition introduces prefix-to-request survival supervision, and the recursive survival chain provides the key reachability-aware weighting mechanism for list evaluation.

\subsection{Online A/B Test}
We conducted a 7-day online A/B test on Kuaishou's main feed reranking pipline (>400M DAU, >2 hour avg. stay time per user each day) with 5\% traffic allocation. The control group follows the production re-ranking pipeline, while the treatment replaces CAVE~\cite{zhang2025from} with SWIM, leaving upstream retrieval, ranking, and generator modules intact. Specifically, the generator produces $\sim$70 lists from 60 candidates, and the evaluator selects the optimal one. Relative improvements over the baseline are reported in Table~\ref{tab:online_ab}.

\begin{table}[t]
\centering
\caption{Online A/B test results on the Industrial reranking system. 
Relative lifts are reported against the baseline.}
\label{tab:online_ab}
\resizebox{\linewidth}{!}{
\begin{tabular}{lcc}
\toprule
\textbf{Metric} & \textbf{Relative lift} & \textbf{Confidence Interval} \\
\midrule
App stay time & +0.351\% &
[0.27\%, 0.42\%] \\
% Video play time & +0.732\% &
% [0.64\%, 0.82\%] \\
Retention (LT7) &  +0.048\% & [0.01\%, 0.09\%] \\
\bottomrule
\end{tabular}
}
\end{table}

\textbf{Analysis.}The online A/B test evaluates whether offline ranking improvements translate into tangible gains in real-world user engagement and retention. We report two core online metrics: App stay time and the 7-day retention rate (LT7). App stay time measures the total duration a user spends within the application. LT7 denotes the 7-day user lifetime, which serves as a key indicator for evaluating long-term DAU stability and retention benefits in the subsequent week. Both metrics exhibit positive relative lifts, with 95\% confidence intervals reported.
\section{Conclusion}

In this paper, we proposed SWIM, a session-supervised evaluator for generative re-ranking. 
SWIM models a candidate list through a boundary-augmented survival process, which captures both the transition from the session prefix to the current request and the step-wise continuation behavior within the list. By combining recursive survival probabilities with reached-position conditional rewards, SWIM estimates the prefix-conditioned contribution of the current list to the session-level objective.

Looking forward, the SWIM framework opens promising avenues for future research. One critical direction involves developing robust counterfactual and off-policy evaluation protocols for generative list spaces. Furthermore, extending our session-prefix modeling to accommodate longer and more heterogeneous browsing trajectories presents an exciting challenge.

%%
%% If your work has an appendix, this is the place to put it.
\appendix
\section*{Acknowledgements}
This work was supported by Kuaishou Technology and the National Natural Science Foundation of China (No. U24A20253).
\section*{GenAI Usage Disclosure}
Generative AI was used solely for editorial purposes, including grammar checking, sentence refinement, and improving conciseness. It was not used for generating content or making substantive intellectual contributions.

\bibliographystyle{ACM-Reference-Format}
\bibliography{bib}

\end{document}